\newtheorem{proposition}{Proposition}
\newtheorem{proof}{Proof}
\newtheorem{definition}{Definition}
\begin{document}
%

\title{Low Cost Monitoring and Intruders Detection using Wireless Video Sensor Networks}

 \author{Jacques M. Bahi, Christophe Guyeux, Abdallah Makhoul\\ 
\footnotesize%
 	University of Franche-Comt\'e (LIFC)\\
 	\footnotesize%
 	 Rue Engel-Gros, BP 527, 90016 Belfort Cedex, France\\
 	 \footnotesize%
   E-mails: \{Jacques.Bahi, Mourad.Hakem, Abdallah.Makhoul\}@univ-fcomte.fr \\
   ~\\
      Congduc Pham\\
   \footnotesize%
 	University of Pau (LIUPPA)\\ 
 	\footnotesize%
 	Avenue de l'Universit\'e, BP 1155, 64013 Pau France\\
   \footnotesize%
   E-mail: \{congduc.pham\}@univ-pau.fr 
   }

\maketitle

\begin{abstract}        

There is a growing interest in the use of video sensor networks in surveillance applications in order to detect intruders with low cost. The essential concern of such networks is whether or not a specified target can pass or intrude the monitored region without being detected. This concern forms a serious challenge to wireless video sensor networks of weak computation and battery power.

In this paper, our aim is to prolong the whole network lifetime while fulfilling the surveillance application needs. We present a novel scheduling algorithm where only a subset of video nodes contribute significantly to detect intruders and prevent malicious attacker to predict the behavior of the network prior to intrusion. Our approach is chaos-based, where every node based on its last detection, a hash value and some pseudo-random numbers easily computes a decision function to go to sleep or active mode. We validate the efficiency of our approach through theoretical analysis and demonstrate the benefits of our scheduling algorithm by simulations. Results show that in addition of being able to increase the whole network lifetime and to present comparable results against random attacks (low stealth time), our scheme is also able to withstand malicious attacks due to its fully unpredictable behavior.

\end{abstract}

\section{Introduction}

Instead of using traditional vision systems built essentially
from fixed video cameras, it is possible to deploy autonomous
and small wireless video sensor nodes (WVSN)~\cite{51} to achieve video surveillance of a given area of interest. Doing so lead to a
much higher level of flexibility, therefore extending the range
of surveillance applications that could be considered. More
interestingly, this scenario can support dynamic deployment
scenario even in so-called object and obstacle-rich environments
or hard-to-access areas. Such wireless video sensor
nodes can in addition be thrown in mass to constitute a large
scale surveillance infrastructure. In these scenarios, hundreds
or thousands of video nodes of low capacity (resolution,
processing, and storage) of a same or similar type can be
deployed in an area of interest.

Surveillance applications have very specific needs due to
their inherently critical nature associated to security~\cite{68,69,70}. 
The basic objective of video surveillance systems is to allow detection 
and/or identification of intruders. Therefore, in that context, the main goal of a video sensor 
network is to ensure the coverage of the whole area of interest at any time $t$. 
Another issue of prime importance is related to energy considerations since the scarcity of energy does have a direct impact on coverage, as it is not possible to have all the video nodes in activity at the same time. 
Therefore, a common approach is to define a subset of the deployed nodes to be active while the other nodes can sleep. 
There are already some techniques that schedule video nodes to work alternatively while maintaining the complete coverage~\cite{72,73,74}. 
The main idea in these techniques is to turn off a redundant node. 
Here redundancy means that the covered area by a node is completely covered by its neighbors too. 
However, these techniques usually depend on location or directional information, which is costly in energy and complexity. 
Usually it is very difficult to determine the redundant nodes without the location information. 
Fortunately, not all applications need a complete coverage at anytime, and in most surveillance applications for intrusion detection, most sensor nodes can move to a so-called ``idle mode'' in the absence of intrusions. When an intruder is detected by a node all the network will be alerted. In that context, it is critical to provide an effective scheme for turning off video nodes without degrading the surveillance quality.

In this paper, we present a solution to the joint scheduling problem in surveillance applications using video sensor nodes. 
We provide a chaotic sleeping scheme and conduct a theoretical and simulation analysis of both performances and security. 
Until now, only random approaches have been extensively studied in the literature to turn off video nodes without degrading the surveillance quality. 
Even if such methods present good scores in detecting random intrusions while preserving the lifetime of the network, they do not encompass the situation of a malicious attacker. 
That is to say, the intruder is not supposed to know something about the surveillance scheme, he cannot observe the WVSN for a while, or he is not authorized to deduce anything from his possible knowledge. 
In this paper, we intend to tackle with situations where the attacker is not supposed passive: he is smart and does not necessarily choose a random way to achieve his intrusion. 
In addition of preserving the network lifetime and being able to face random attacks, we show that our scheme is also capable to withstand attacks of a malicious adversary due to its unpredictable behavior.

The rest of the paper is organized as follows.
In Section \ref{section:related work}, related works related to surveillance applications with WVSN are presented.
Smart threats and malicious attackers are introduced in Section \ref{section:Smart Threat}.
Basic recalls and terminologies on the fields of the mathematical theory of chaos and chaotic iterations are given in Section \ref{section:Basic recalls}, and the link unifying them is explained too.
The surveillance scheme based on the chaos theory is detailed in Section \ref{chaos-sch}. 
We show in Section \ref{section:theoretical study} that our proposed scheme can be used against malicious attacks. Simulation results in Section \ref{section: simulation results} compare our scheme to the classical random schedule in terms of intruder's stealth time, network lifetime and energy repartition. The paper ends by a conclusion section, where our contribution is summed up and planned future work is detailed.

\section{Related Works}
\label{section:related work}

In video sensor networks, minimizing energy consumption and prolonging the system lifetime are major design objectives. 
Due to the significant energy-saving when a node is sleeping, a frequently used mechanism is to schedule the sensor nodes such that redundant nodes go to sleep as often and for as long as possible. 
By selecting only a subset of nodes to be active and keeping the remaining nodes in a sleep state, the energy consumption of the network is reduced, thereby extending the operational lifetime of the sensor network. 

In this context, the coverage problem for wireless video sensor networks can be categorized as:
\begin{itemize}
\item {\it Known-Targets Coverage Problem,} which seeks to determine a subset of connected video nodes that covers a given set of target-locations scattered in a 2D plane.
\item {\it Region-Coverage Problem,} which aims to find a subset of connected video nodes that ensures the coverage of the entire region of deployment in a 2D plane.
\end{itemize}

Most of the previous works have considered the known-targets coverage problem~\cite{76,77,78,80}. 
The objective is to ensure at all time the coverage of some targets with known locations that are deployed in a two-dimensional plane.
For example, the authors in~\cite{80} organize sensor nodes into mutually exclusive subsets that are activated successively, where the size of each subset is restricted and not all of the targets need to be covered by the sensors in one subset.
In~\cite{77}, a directional sensor model is proposed, where a sensor is allowed to work in several directions.
The idea behind this is to find a minimal set of directions that can cover the maximum number of targets.
It is different from the approach described in~\cite{76} that aims to find a group of non-disjoint cover sets, each set covering all the targets to maximize the network lifetime.

Regarding the Region-Coverage Problem in which this study takes place, existing works focus on finding an efficient deployment pattern so that the average overlapping area of each sensor is bounded. 
The authors in~\cite{81} analyze new deployment strategies for satisfying some given coverage probability requirements with directional sensing models. 
A model of directed communications is introduced to ensure and repair the network connectivity.
Based on a rotatable directional sensing model, the authors in~\cite{75} present a method to deterministically estimate the amount of directional nodes for a given coverage rate. 
A sensing connected sub-graph accompanied with a convex hull method is introduced to model a directional sensor network into several parts in a distributed manner. 
With adjustable sensing directions, the coverage algorithm tries to minimize the overlapping sensing area of directional sensors only with local topology information.
Lastly, in~\cite{73}, the authors present a distributed algorithm that ensures both coverage of the deployment area and network connectivity, by providing multiple cover sets to manage Field of View redundancies and reduce objects disambiguation.

All the above algorithms depend on the geographical location information (position and direction) of video nodes. 
These algorithms aim to provide a complete-coverage network so that any point in the target area would be covered by at least one video node. 
However, this strategy is not as energy-efficient as what we expect because of the following two reasons.
Firstly, the energy cost and system complexity involved in obtaining geometric information may compromise the effect of those algorithms. 
Secondly, video nodes located at the edge of the area of interest must be always in an active state as long as the region is required to be completely covered. 
These video nodes will die after some time and their  coverage area will be left without surveillance. 
Thus, the network coverage area will shrink gradually from outside to inside. 
This condition is unacceptable in video surveillance applications and intrusion detection, because the major goal here is to detect intruders as they cross a border or as they penetrate a protected area.

One direction to solve these problems is to schedule a node to sleep following a probabilistic approach. 
Each node remains awake with a given probability so that the coverage of the area can be guaranteed.
However the probability can be modeled by an observer, who can take benefits from his observations to predict the dynamic of the network.
This is obviously a security flaw.
These considerations lead us to the introduction of smart threats given in the next section.

\section{Smart Threats}
\label{section:Smart Threat}

\subsection{Introduction}

Let us suppose that an adversary tries to reach a location $X$ into the area without being detected. 
We consider that this situation leads to two categories of attacks against WVSN surveillance. 

On the one hand, the attacker only knows that the area is under surveillance. 
He tries to take its chance, for example by following the shortest way or by trying a random path. 
In this first category of attack that we call ``blind elementary attacks'', the intruder does not know how the surveillance is achieved as he does not observe the WVSN.

On the other hand, in the second category of attacks, called ``malicious attacks'' in this paper, the intruder is supposed to be intelligent. 
He can try to take benefits from his observations to understand the behavior of the WVSN. 
After having recorded the dynamic of the WVSN for a given time, the malicious intruder can try to determine when video nodes are turned on. 
This prediction can help the intruder to find a way to reach $X$ without being detected.

In our opinion, the most reasonable way to evaluate the consequences of a malicious attack is to suppose that the intruder has access to the surveillance scheme. 
With this supposition, our security model encompasses the case where an attacker can have a physical access to a given node, thus determining the embedded mechanism used for video surveillance.
In this Kerckhoffs-based principle, the attacker knows all but the initial parameters of the nodes.
Moreover, he can observe the WVSN for a while.
To achieve his intrusion, he can use all of the acquired knowledge -- the sole difficulty is his lack of a secret parameter (the secret key) used to initialize the surveillance process.

The context of blind elementary attacks is well-known and understood: it has been studied a lot in the last decade, and various solutions have yet been proposed (Section \ref{section:related work}). 
On the contrary, to the best of our knowledge, the case of an intelligent intruder (smart threat) has not yet really been treated. 
In this paper, we intend to propose a scheme able to withstand attacks encompassing these malicious intrusions, and thus to offer a first solution to the problem raised by the smart threats existence hypothesis.

Technically speaking, the proposed approach offers several benefits. 
Firstly, the node scheduling algorithm does not need location information. 
Therefore, the energy consumption is reduced because there is no need to locate the node itself and its neighbors. 
Secondly, we will show that it performs as well as a random scheduling, in terms of lifetime and intrusion detection against blind elementary attacks (see Section \ref{section: simulation results}). 
Lastly, due to its chaotic properties, its coverage is unpredictable, and thus a malicious adversary has no solution to attack the network (Section \ref{section:theoretical study}).

\subsection{Classification of Malicious Attacks}

When a malicious adversary attacks a WVSN, he can concentrate his efforts 
either on the global network or on some specific nodes. Depending on the
considered situation, he can perform either an active attack, modifying 
the network architecture or a node, or a passive attack based only on 
observations. He can have access to several WVSN using the
same algorithm. Furthermore, he can build its own network to make some 
experiments. His objective is to find the secret key used in the targeted 
network: with this knowledge, the attacker will be able to predict the 
behavior of the video sensor nodes.

Active attacks have been already investigated several times in the literature.
These studies encompass the cases where nodes can be added, moved, modified, or 
removed, where communications between nodes can be observed or modified, and
where the global architecture of the network is attacked. However, some WVSN
are such that any modification of the network is signaled, leading to the 
impossibility of such active attacks. On the contrary, passive observations 
and deductions of a malicious attacker are always possible. To the best
of our knowledge, these threats have not yet been investigated.

The passive malicious attacks can be classified as follows.

\begin{itemize}
 \item In the \textbf{Target Only Attack (TOA)}, the adversary can only 
observe targeted networks.
 \item In the \textbf{Constant Key Attack (CKA)}, the adversary has access
to several WVSN using the same secret key. The areas under surveillance and
the network architecture change from one WVSN to another, but the attacker
knows that all these networks use the same algorithm with the same secret
key.
 \item In the \textbf{Known Original Attack (KOA)}, the attacker had previously 
accessed to the WVSN and its area. He had the opportunity to test various 
keys in a previous access. He hopes that this knowledge will help him to 
determine a way to realize his intrusion when the WVSN is really launched.
 \item In the \textbf{Chosen Key Attack (CKA)}, the adversary has access
to an exact copy of the network and area under surveillance than the one he
want to attack. He has realized for instance a miniature model or a computer
simulator having exactly the same behavior than the targeted network 
and its area. He can thus try several secret key, and if he achieves to 
reproduce exactly the same behavior for the network, then he can reasonably
suppose that the true secret key has been discovered.
 \item Finally, in the \textbf{Estimated Original Attack (EOA)}, the attacker has
only an estimation, an approximation of the network and its area.  
\end{itemize}

In each of these categories, the sole objective of the attacker is to 
obtain the value of the secret key. With this knowledge, he will able to
determine the WVSN behavior, finding by doing so a way to achieve his
intrusion.

\subsection{Security Levels in CKA}

We now take place in the Chosen Key Attack problem. Let $k_0$ be the secret
key used to initiate the video-surveillance. Denote by 
$Y_k$ the probabilistic model that the attacker can build with his observations,
and by $\mathds{K}$ the set of all possible keys.

\begin{definition}[Insecurity]
 The WVSN is insecure against the Target Only Attack if and only if 
$\exists k_1 \in \mathds{K}, p(Y_{k_1}) = p(Y_{k_0})$ and
$\forall k_2 \in \mathds{K}, p(Y_{k_2}) \neq p(Y_{k_0})$ 
\end{definition}

On the contrary,

\begin{definition}[Security]
 The WVSN is secure against the Target Only Attack if and only if 
$\forall k \in \mathds{K}, p(Y_{k}) = p(Y_{k_0})$ 
\end{definition}

In that situation, it is easy to prove that the mutual information 
$\mathcal{I}(k_0,Y_{k_0})$ is equal to 0, which is often refered 
as \emph{perfect secrecy}.

\section{Basic Recalls}
\label{section:Basic recalls}

In the sequel $S^{n}$ denotes the $n-$th term of a sequence $S$ and $V_{i}$ is the $i-$th component of a vector $V$. $f^{k}=f\circ ...\circ f$ denotes the $k-$th composition of a function $f$. Finally, the following notation is used: $\llbracket1;N\rrbracket=\{1,2,\hdots,N\}$.

\subsection{Devaney's Chaotic Dynamical Systems}
\label{subsection:Devaney}

Consider a topological space $(\mathcal{X},\tau)$ and a continuous function $f : \mathcal{X} \rightarrow \mathcal{X}$. 

\begin{definition}
$f$ is said to be \emph{topologically transitive} if, for any pair of open sets $U,V \subset \mathcal{X}$, there exists $k>0$ such that $f^k(U) \cap V \neq \varnothing$.
\end{definition}

\begin{definition}
An element (a point) $x$ is a \emph{periodic element} (point) for $f$ of period $n\in \mathds{N}^*,$ if $f^{n}(x)=x$.
\end{definition}

\begin{definition}
$f$ is said to be \emph{regular} on $(\mathcal{X}, \tau)$ if the set of periodic points for $f$ is dense in $\mathcal{X}$: for any point $x$ in $\mathcal{X}$, any neighborhood of $x$ contains at least one periodic point (without necessarily the same period).
\end{definition}

\begin{definition}
$f$ is said to be \emph{chaotic} on $(\mathcal{X},\tau)$ if $f$ is regular and topologically transitive.
\end{definition}

The chaos property is strongly linked to the notion of ``sensitivity'', defined on a metric space $(\mathcal{X},d)$ by:

\begin{definition}
\label{sensitivity} $f$ has \emph{sensitive dependence on initial conditions}
if there exists $\delta >0$ such that, for any $x\in \mathcal{X}$ and any neighborhood $V$ of $x$, there exists $y\in V$ and $n > 0$ such that $d\left(f^{n}(x), f^{n}(y)\right) >\delta $.
$\delta$ is called the \emph{constant of sensitivity} of $f$.
\end{definition}

Indeed, Banks \emph{et al.} have proven in~\cite{Banks92} that when $f$ is chaotic and $(\mathcal{X}, d)$ is a metric space, then $f$ has the property of sensitive dependence on initial conditions (this property was formerly an element of the definition of chaos). To sum up, quoting Devaney in~\cite{Devaney}, a chaotic dynamical system ``is unpredictable because of the sensitive dependence on initial conditions. It cannot be broken down or simplified into two subsystems which do not interact because of topological transitivity. And in the midst of this random behavior, we nevertheless have an element of regularity''. Fundamentally different behaviors are consequently possible and occur in an unpredictable way.



\subsection{Chaotic Iterations}
\label{sec:chaotic iterations}

Let us consider a \emph{system} of a finite number $\mathsf{N} \in \mathds{N}^*$ of elements (or \emph{cells}), so that each cell has a Boolean \emph{state}. 
A sequence of length $\mathsf{N}$ of Boolean states of the cells corresponds to a particular \emph{state of the system}. 
A sequence which elements are subsets of $\llbracket 1;\mathsf{N} \rrbracket $ is called a \emph{strategy}. 
The set of all strategies is denoted by $\mathbb{S}.$

\begin{definition}
The set $\mathds{B}$ denoting $\{0,1\}$, let $f:\mathds{B}^{\mathsf{N}}\longrightarrow \mathds{B}^{\mathsf{N}}$ be a function and $S\in \mathbb{S}$ be a strategy. 
The so-called \emph{chaotic iterations} (CIs) are defined by~\cite{Robert1986} $x^0\in \mathds{B}^{\mathsf{N}}$ and
\begin{equation}
\forall n\in \mathds{N}^{\ast }, \forall i\in \llbracket1;\mathsf{N}\rrbracket ,x_i^n=\left\{
\begin{array}{ll}
x_i^{n-1} & \text{ if } i \notin S^n \\
\left(f(x^{n-1})\right)_{S^n} & \text{ if } i \in S^n.\end{array}\right.
\end{equation}
\end{definition}

In other words, at the $n-$th iteration, only the $S^{n}-$th cell is \textquotedblleft iterated\textquotedblright .

Note that in a more general formulation, $S^n$ can be a subset of components and $f(x^{n-1})_{S^{n}}$ can be replaced by $f(x^{k})_{S^{n}}$, where $k<n$, describing for example, delays transmission.
For the general definition of such chaotic iterations, see, \emph{e.g.}, \cite{Robert1986}.

The term ``chaotic'', in the name of these iterations, has \emph{a priori} no link with the mathematical theory of chaos recalled previously.
However, we have proven in \cite{guyeux09} that in a relevant metric space $(\mathcal{X},d)$, the vectorial negation $f_{0}(x_{1},\hdots,x_{\mathsf{N}})=(\overline{x_{1}},\hdots, \overline{x_{\mathsf{N}}})$ satisfies the three conditions for Devaney's chaos. 
This result is recalled in the next section.

\subsection{Chaotic Iterations and Devaney's Chaos}
\label{sec:topological}

Denote by $\Delta $ the \emph{discrete Boolean metric}, $\Delta(x,y)=0\Leftrightarrow x=y.$ Given a function $f:  \mathds{B}^{\mathsf{N}}\longrightarrow \mathds{B}^{\mathsf{N}}$, define the function $F_{f}:$ $\llbracket1;\mathsf{N}\rrbracket\times \mathds{B}^{\mathsf{N}}\longrightarrow \mathds{B}^{\mathsf{N}}$ such that
\begin{equation}
F_{f}(k,E)=\left( E_{j}.\Delta (k,j)+f(E)_{k}.\overline{\Delta (k,j)}\right)_{j\in \llbracket1;\mathsf{N}\rrbracket},
\end{equation}

\noindent where + and . are the Boolean addition and product operations.
The \emph{shift} function is defined by $\sigma :(S^{n})_{n\in \mathds{N}}\in \mathbb{S}\rightarrow (S^{n+1})_{n\in \mathds{N}}\in \mathbb{S}$ and the \emph{initial function} $i$ is the map which associates to a sequence, its first term: $i:(S^{n})_{n\in \mathds{N}}\in \mathbb{S}\rightarrow S^{0}\in \llbracket1;\mathsf{N}\rrbracket$.

Consider the phase space: $\mathcal{X}=\llbracket1;\mathsf{N}\rrbracket^{\mathds{N}}\times \mathds{B}^{\mathsf{N}}$ and the map
\begin{equation}
G_{f}\left( S,E\right) =\left( \sigma (S),F_{f}(i(S),E)\right).
\end{equation}

The chaotic iterations can be described by the following iterations
\begin{equation}
\left\{
\begin{array}{l}
X^{0}\in \mathcal{X} \\
X^{k+1}=G_{f}(X^{k}).
\end{array}
\right.
\end{equation}

Let us define a new distance between two points $(S,E),(\check{S},\check{E} )\in \mathcal{X}$ by
\begin{equation}
d((S,E);(\check{S},\check{E}))=d_{e}(E,\check{E})+d_{s}(S,\check{S}),
\end{equation}

where
\begin{itemize}
\item $\displaystyle{d_{e}(E,\check{E})}=\displaystyle{\sum_{k=1}^{\mathsf{N}}\Delta (E_{k},\check{E}_{k})} \in \llbracket 0 ; \mathsf{N} \rrbracket$,
\item $\displaystyle{d_{s}(S,\check{S})}=\displaystyle{\dfrac{9}{\mathsf{N}}\sum_{k=1}^{\infty }\dfrac{|S^{k}-\check{S}^{k}|}{10^{k}}} \in [0 ; 1].$
\end{itemize}

This new distance has been introduced in \cite{guyeux10,guyeux09} to satisfy the following requirements. When the number of different cells between two systems is increasing, then their distance should increase too. In addition, if two systems present the same cells and their respective strategies start with the same terms, then the distance between these two points must be small because the evolution of the two systems will be the same for a while. The distance presented above follows these recommendations. Indeed, if the floor value $\lfloor d(X,Y)\rfloor $ is equal to $n$, then the systems $E, \check{E}$ differ in $n$ cells. In addition, $d(X,Y) - \lfloor d(X,Y) \rfloor $ is a measure of the differences between strategies $S$ and $\check{S}$. More precisely, this floating part is less than $10^{-k}$ if and only if the first $k$ terms of the two strategies are equal. Moreover, if the $k-$th digit is nonzero, then the $k-$th terms of the two strategies are different. 

\medskip

It is proven in \cite{guyeux10,guyeux09} by using the sequential continuity that,

\begin{proposition}
\label{continuite} 
$\forall \mathsf{N} \in \mathds{N}^*, \forall f:\mathds{B}^\mathsf{N} \to \mathds{B}^\mathsf{N}, G_{f}$ is a continuous function on $(\mathcal{X},d)$.
\end{proposition}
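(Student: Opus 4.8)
The plan is to prove continuity of $G_f$ by establishing \emph{sequential} continuity, which suffices since $(\mathcal{X},d)$ is a metric space. So I would fix an arbitrary sequence $(X^{(m)})_{m\in\mathds{N}}$ in $\mathcal{X}$, write $X^{(m)}=(S^{(m)},E^{(m)})$, assume $X^{(m)}\to X=(S,E)$ for the distance $d$, and show $G_f(X^{(m)})\to G_f(X)$.

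The first step is to exploit the structure $d=d_e+d_s$ together with the fact that $d_e$ takes values in $\llbracket 0;\mathsf{N}\rrbracket$, hence is integer-valued. Therefore $d(X^{(m)},X)\to 0$ forces $d_e(E^{(m)},E)=0$, i.e.\ $E^{(m)}=E$, for every $m$ larger than some $m_0$. Simultaneously $d_s(S^{(m)},S)\to 0$, and by the characterization of $d_s$ recalled above (the floating part is $<10^{-k}$ iff the first $k$ terms of the two strategies coincide), for $m$ large enough $S^{(m)}$ and $S$ share enough initial terms; in particular $i(S^{(m)})=i(S)$.

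Next I would evaluate the target distance using $G_f(S,E)=(\sigma(S),F_f(i(S),E))$, namely
\begin{equation}
d(G_f(X^{(m)}),G_f(X)) = d_e\big(F_f(i(S^{(m)}),E^{(m)}),\,F_f(i(S),E)\big) + d_s(\sigma(S^{(m)}),\sigma(S)).
\end{equation}
For $m\geq m_0$ the first summand vanishes, since then $E^{(m)}=E$ and $i(S^{(m)})=i(S)$ make the two arguments of $F_f$ identical, so that this part is genuinely zero (not merely small). For the second summand, the shift $\sigma$ satisfies the elementary bound $d_s(\sigma(S),\sigma(\check{S}))\leq 10\,d_s(S,\check{S})$ — deleting the first digit and renumbering multiplies each remaining weight by $10$ — hence $d_s(\sigma(S^{(m)}),\sigma(S))\leq 10\,d_s(S^{(m)},S)\to 0$. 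Combining the two estimates gives $d(G_f(X^{(m)}),G_f(X))\to 0$, which is exactly sequential continuity of $G_f$, and the proposition follows.

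The argument is essentially routine bookkeeping over the two components of $d$; the only points that really deserve care are the observation that $d_e$ is discrete (so that $F_f$ becomes constant along the tail of the sequence rather than just close), and the short Lipschitz-type estimate for $\sigma$ with respect to $d_s$. I do not expect a genuine obstacle beyond these.
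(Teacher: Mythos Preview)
Your proposal is correct and follows exactly the approach the paper indicates: the paper does not spell out the proof but explicitly states that the result ``is proven in \cite{guyeux10,guyeux09} by using the sequential continuity,'' and your argument is precisely that sequential-continuity verification, exploiting the integrality of $d_e$ to freeze the $E$-component and the first strategy term, together with the elementary Lipschitz bound $d_s(\sigma(S),\sigma(\check S))\leq 10\,d_s(S,\check S)$. The only cosmetic point is that the threshold $m_0$ should be taken large enough for \emph{both} $E^{(m)}=E$ and $i(S^{(m)})=i(S)$ to hold, but this is immediate.
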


It is then checked in \cite{guyeux10,guyeux09} that in the metric space $(\mathcal{X},d)$, the vectorial negation $f_{0}(x_{1},\hdots,x_{\mathsf{N}})=(\overline{x_{1}},\hdots, \overline{x_{\mathsf{N}}})$ satisfies the three conditions for Devaney's chaos: regularity, transitivity, and sensitivity. This has led to the following result.

\begin{proposition}
CIs are chaotic on $(\mathcal{X},d)$ as it is defined by Devaney.
\end{proposition}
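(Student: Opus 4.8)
The plan is to verify directly the two defining ingredients of Devaney's chaos for $G_{f_0}$ on $(\mathcal{X},d)$ — topological transitivity and regularity — and then to observe that sensitive dependence follows for free. Continuity of $G_{f_0}$ is already granted by Proposition~\ref{continuite}, so that hypothesis is in place. The single structural fact that powers both arguments is that $f_0$ is the vectorial negation: applying $G_{f_0}$ once shifts the strategy and flips exactly the $S^0$-th Boolean coordinate of the current state. Consequently, from \emph{any} state $E\in\mathds{B}^{\mathsf N}$ one can reach \emph{any} target state $E'\in\mathds{B}^{\mathsf N}$ in at most $\mathsf N$ applications of $G_{f_0}$, using a finite strategy block that successively designates the coordinates on which $E$ and $E'$ disagree.

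Before both parts I would record the geometry of small balls. Since $d=d_e+d_s$ with $d_e$ integer-valued in $\llbracket 0;\mathsf N\rrbracket$, a ball of radius $\varepsilon<1$ around $(S,E)$ consists exactly of points $(\tilde S,E)$ having the same Boolean part and whose strategy $\tilde S$ agrees with $S$ on a long enough prefix; precisely, $d_s(\tilde S,S)\le 10^{-k}$ whenever $\tilde S$ and $S$ share their first $k$ terms (using the $\tfrac{9}{\mathsf N}$ normalization and summing the geometric tail). So fixing $k$ with $10^{-k}<\varepsilon$ reduces ``being $\varepsilon$-close to $(S,E)$'' to ``same Boolean part and strategies agreeing on the first $k$ terms''.

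For topological transitivity, given nonempty open sets $U,V$, pick $(S,E)\in U$ and $(\check S,\check E)\in V$ with prefix lengths $k_U,k_V$ witnessing inclusion of such small balls in $U$ and $V$. Running the first $k_U$ terms of $S$ from $E$ gives some state $E''$; by the reachability remark there is a block $B$ of length $\ell\le\mathsf N$ steering $E''$ to $\check E$. Build $\tilde S$ whose first $k_U$ terms are those of $S$, whose next $\ell$ terms are $B$, and whose remaining terms are $\check S^0,\check S^1,\dots$ Then $(\tilde S,E)\in U$, and with $n=k_U+\ell$ one computes $G_{f_0}^{\,n}(\tilde S,E)=(\sigma^n\tilde S,\check E)$, whose strategy begins with $\check S^0,\check S^1,\dots$, hence lies in $V$; thus $G_{f_0}^{\,n}(U)\cap V\neq\varnothing$. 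For regularity, given $(S,E)$ and $\varepsilon>0$, fix $k$ with $10^{-k}<\varepsilon$; running the first $k$ terms of $S$ from $E$ reaches some $E'$, then append a block of length $\ell\le\mathsf N$ steering $E'$ back to $E$, set $p=k+\ell$, and let $\hat S$ be the $p$-periodic sequence whose first period is this concatenation. Then $G_{f_0}^{\,p}(\hat S,E)=(\sigma^p\hat S,E)=(\hat S,E)$, so $(\hat S,E)$ is a periodic point with $d((\hat S,E),(S,E))=d_s(\hat S,S)\le 10^{-k}<\varepsilon$; hence periodic points are dense. Finally, since $(\mathcal{X},d)$ is metric, transitivity and regularity force sensitive dependence by the theorem of Banks \emph{et al.}~\cite{Banks92}, so $G_{f_0}$, and therefore the chaotic iterations it encodes, is chaotic in the sense of Devaney.

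I expect no deep obstruction: the crux is to use the ``reachable in at most $\mathsf N$ steps'' observation for the vectorial negation and then to be careful that the explicitly constructed strategies (prefix of $S$, then a steering block, then either the tail of $\check S$ or a periodic repetition) land in the prescribed balls, which is exactly what the prefix-versus-metric bound of the second paragraph guarantees. The only points demanding attention are bookkeeping ones — that appending finite blocks and infinite tails to a strategy is legitimate in $\mathbb{S}$, and that the shift $\sigma^n$ lines the tail up correctly with the target ball — rather than any genuine analytic difficulty.
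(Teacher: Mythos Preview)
Your argument is correct, and it is precisely the standard route: exploit that the vectorial negation lets any Boolean state be steered to any other in at most $\mathsf{N}$ steps, build strategies with prescribed prefixes so the constructed points land in the required balls, and deduce sensitivity from Banks \emph{et al.} The paper itself does not prove this proposition at all; it simply records that regularity, transitivity, and sensitivity were checked in \cite{guyeux10,guyeux09} and states the result, so your write-up supplies exactly what the paper outsources to those references.
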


These chaotic iterations have been used to define in \cite{guyeux09} an hash function and a pseudo-random number generator (PRNG) able to pass the stringent TestU01 battery of tests in \cite{bgw10:ip}.

\section{Chaos-Based Scheduling}
\label{chaos-sch}

\subsection{The General Algorithm}

\subsubsection{Network Capabilities}

The WVSN is supposed to be constituted by $2^{\mathsf{N}}$ nodes $V_i, i\in \llbracket 0, 2^{\mathsf{N}}-1 \rrbracket$. 
Each $V_i$ is able to wake up on a specific signal, to survey a given area (and to detect intrusions), to send a wake up signal to another node $V_j$, and to go to sleep when it is required. Furthermore, it is supposed that $V_i$ embeds:

\begin{itemize}
\item The mechanisms required by the intrusion detection: a sensing function $c_i(t)$, such as a camera, which returns some digital data at each listening time, and a decision function $d_i(c)$ which returns if an intrusion is detected in this sensing values ($c_i(t)$) or not.
\item An internal clock having the time $T_i = r_i T_0$ as a reference.
\item A vector of $\mathsf{N}$ binary digits, called \emph{the state of the system} $V_i$, and the capability to swap each bit of this vector ($0 \leftrightarrow 1$).
\item An integer $e_i$, called \emph{listening time}, initialized to 0.
\end{itemize}

In other words, each node $V_i$ can achieve CIs. 
Thus, each node can compute, easily and by using a few resources, a hash value and some pseudo-random numbers as it is recalled in Section \ref{sec:chaotic iterations}. 
We will denote by $g_i$ the seed of the PRNG used in node $V_i$, which is equal to a secret parameter $p_i$ at time $t=0$.
This secret parameter with $\mathsf{N}$ bits has been generated
by a cryptographically secure PRNG, and thus it is uniformly
distributed into $\llbracket 0; 2^\mathsf{N}-1 \rrbracket$.
The state $V_i$ is initialized to the binary decomposition
of $g_i$.

\subsubsection{Deploying the Network}

The deployment of video sensor nodes in the physical environment is the first operation (step)
in the network lifecycle. It may take several forms. Sensor nodes may be randomly deployed dropping them from a plane, and placed one by one by a
human or a robot. Deployment may be a one time activity or a continuous process. 
These methods have been extensively studied in the literature.
In our method, the sole requirement to satisfy is to guarantee the uniform repartition into the region of interest.

\subsubsection{Initialization of the WVSN}

At time $t=0$, a subset $\mathcal{I} \subset \llbracket 0, 2^\mathsf{N}-1 \rrbracket$ of nodes are woken up and $\forall i \in \mathcal{I}, e_i^{t_0} = T_i$.

\subsubsection{Surveillance}

The principle of surveillance applications is defined as follows. 
At each time $t_j = j \times T_0, j = 1, 2, \hdots $:

\begin{enumerate}
\item If a sleeping node $V_i$ has received $n_i^{t_j-1} \geqslant 1$ wake up orders during the time interval $[t_{j}-1,t_j]$, then it goes into active mode and sets its listening time $e_i^{t_j}$ to $n_i^{t_j-1} T_i$.
\item If an active node $V_i$ has received $n_i^{t_j-1} \geqslant 1$ orders to wake up during the time interval $[t_{j-1},t_j]$, then it increments its listening time: $e_i^{t_j} = e_i^{t_j-1}+n_i^{t_j-1} T_i$.
\item For each node $V_i$ having a listening time $e_i^{t_j} \neq 0$:
\begin{itemize}
\item $V_i$ ensures the surveillance of its area during $T_0$,
\item If, during this time interval, an intrusion is detected, then the WVSN is under alert.
\item If $t_j$ is the first listening time of $V_i$ after having activated, then:
\begin{itemize}
\item The hash value $h_i^{t_j}$ of the sensed value $c_i(t_j)$ is computed (cf. Section~\ref{sec:chaotic iterations}).
\item The seed $g_i$ of the PRNG of $V_i$ is set to $h_i^{t_j}+t_j$, where $+$ is the concatenation of the digits of $h_i^{t_j}$ and $t_j$ (thus even if $h_i^{t_j}=h_i^{t_k}, k<j$, we have $g_i^{t_j} \neq g_i^{t_k}$).
\item The $\mathsf{N}$ bits of the state of the system $V_i$ are set to $E_i^{t_j}$, where $E_i^{t_j}$ is the binary decomposition of $i$ shown as a binary vector of length $\mathsf{N}$.
\end{itemize}
\end{itemize}
\item $\mathsf{N}$ bits are computed with the PRNG of $V_i$. These bits define an integer $S_i^{t_j} \in \llbracket 0, 2^\mathsf{N}-1 \rrbracket$. Then the bit of $E_i^{t_j}$ in position $S_i^{t_j}$ is switched, which leads to a new state $E_i^{t_{j+1}}$. By doing so, CIs are realized.
\item Each active node $V_i$ decreases its listening time: $e_i^{t_j} = e_i^{t_j} - 1$.
\item For each active node having its listening time $e_i^{t_j} = 0$:
\begin{itemize}
\item $V_i$ sends the wake up order to node $V_k$, where $k \in \llbracket 0, 2^\mathsf{N}-1 \rrbracket$ is the integer whose binary decomposition is the last state of the system $V_i$ ($E_i^{t_{j+1}}$). 
\item $V_i$ goes to sleep.
\end{itemize}
\end{enumerate}

\section{Theoretical Study}
\label{section:theoretical study}

\subsection{Scheduling as Chaotic Iterations}

\label{subsec:Scheduling as Chaotic Iterations}
The scheduling scheme presented above can be described as CIs.
The global state $E^t$ of the whole system is constituted by the reunion of each internal state $E_i^t$ of each node $i$.
This is an element of $\mathds{B}^{\mathsf{N}\times 2^\mathsf{N}}$.
The strategy at time $t$ is the subset of $\llbracket 0; \mathsf{N}\times 2^\mathsf{N} \rrbracket$ constituted by all of the strategies that are computed into the awaken nodes at time $t$.
More precisely, if the node $V_k$ has computed the strategy $S_k^t$ at time $t$, then the global strategy $S^t$ will contain the value $S_k^t+k\times \mathsf{N}$.
Lastly, the iteration function is the vectorial negation defined $:\mathds{B}^{\mathsf{N}\times 2^\mathsf{N}} \rightarrow \mathds{B}^{\mathsf{N}\times 2^\mathsf{N}}$.
A subsequence $E^{m^t}$ is extracted from $E^t$, which determines the changes that occur in the network: nodes whose binary id is into $E^{m^t}$ are nodes that achieve the surveillance at the considered time.
Let us remark that $S^k$ and $m^k$ depend both on the outside world, due to the fact that $S_i^t$ are regularly seeded with the digest of some sensed values.

\subsection{Complexity}

Even if the hash function and the PRNG taken from \cite{guyeux09} and \cite{bgw10:ip} respectively can be replaced by any cryptographically secure hash function and PRNG, we do not recommend their substitution.
Indeed, all of the operations used by our scheme can be achieved by CIs.
Each iteration of CIs is only constituted by the negation of a few binary digits.
Obviously, such an operation is fast and does not consume a lot of energy.
By doing so, we thus obtain an efficient video surveillance scheduling scheme compliant with WVSN requirements.
Section \ref{section: simulation results} will detail more quantitatively this fact.

\subsection{Coverage}

The coverage of the whole area is guaranteed due to the following reasons.

Firstly, the scheduling process corresponds to CIs.
These iterations are chaotic according to Devaney, thus they are transitive.
This transitivity property is the formulation of an uniform distribution in terms of topology.
It claims that the system will never stop to visit any sub-region of the whole area, regardless of how tiny the region is.

Secondly, as the choice of the nodes to wake up at each time are done by using CIs, this selection corresponds to the returned value of our PRNG proposed in \cite{bgw10:ip}.
This ``CI(X,Y)-generator'' takes two PRNGs X,Y as input sequences, realizes CIs with X as strategy, the vectorial negation as update function, and selects the states to publish as outputs by using the second PRNG Y.
By such a combination, we improve the statistical properties of the input PRNG used as strategy, and we add chaotic properties.
The scheduling process corresponds to the CI(X,Y)-generator, with X=$m$ and Y=$S$.
As Y is statistically perfect (Y is CI(ISAAC,ISAAC), which can pass the whole NIST, DieHARD, and TestU01 batteries of tests), the uniform repartition of the states is then guaranteed.

Lastly, experiments of Section \ref{section: simulation results} will show that this intended uniform coverage is well obtained in practice.

\subsection{Security Study}

\subsubsection{Qualitative Approach}

Let us suppose that Oscar, an intruder, knows that the scheduling process is based on CIs, i.e. he knows the whole algorithm, except the seeds that have been used to initiate the PRNGs of each node.
By doing so, we respect the Kerckhoffs' principle: the adversary has all except the secret key.
Oscar's desire is to reach a particular location $X$ of the area without being detected.
To achieve his goal, he can choose two strategies. 
On the one hand, he can try a blind elementary attack, either by following a random way from its position to $X$, of by choosing the shortest path.
The next subsection and the experiments will show that such an attack cannot work.
On the other hand, Oscar can try to take benefits both from his knowledge and his observations.
However, if he can determine the nodes that are awaken at time $t$, he cannot predict the awaken nodes at time $t+1, t+2, ...$
To do so, he should be able to obtain $S^{t+1}, S^{t+2},...$, which are computed from the digests of some values that will be sensed in the future. 
As our hash function satisfy the avalanche effect, due to its chaotic properties, any error on the sensed value lead to a completely different digest.

As Oscar cannot determine the sensed values of each node, at each time and with an infinite precision, he does not have the knowledge of the current state of the global system.
He has only access to an approximation of this state. 
As the global scheduling process is chaotic, this error on the initial condition is magnified at each iteration, leading to the impossibility for Oscar to predict the scheduling process.
This qualitative approach for security will be formalized in the second section below.

\subsubsection{Chaotic Properties}

We now investigate the topological properties presented by the proposed 
video-surveillance scheme.
First  of  all, let us  recall  two  fundamental  definitions from  the
mathematical theory of chaos.

\begin{definition} 
A function $f$ is said  to be {\bf expansive} if $\exists \varepsilon>0$,
$\forall    x   \neq   y$,    $\exists   n    \in   \mathds{N}$    such   that
$d\left(f^n(x),f^n(y)\right) \geq \varepsilon$.
\end{definition}

\begin{definition} 
A  discrete  dynamical  system   is  said  to  be  {\bf  topologically
mixing} if and only if, for  any pair of disjoint open sets $U$,$V
\neq \emptyset$,  $n_0 \in \mathds{N}$  can be found  so that $\forall  n \geq
n_0$, $f^n(U) \cap V \neq \emptyset$.
\end{definition}

As  proven in  \cite{gfb10:ip}, chaotic  iterations are  expansive and
topologically  mixing  when  $f$  is  the  vectorial  negation  $f_0$.
Consequently,  these  properties are  inherited  by the WVSN
presented  previously, which induce a greater
unpredictability.  Any difference on  the initial parameter of  the WVSN
is  in particular  magnified up to  be equal to  the expansivity
constant. 

Now,  what are the  consequences for  a wireless sensor network to  be chaotic
according   to   Devaney's   definition?  Firstly,   the   topological
transitivity property implies indecomposability.

\begin{definition} \label{def10}
A   dynamical   system   $\left(   \mathcal{X},  f\right)$   is   {\bf
indecomposable}  if it  is not  the  union of  two closed  sets $A,  B
\subset \mathcal{X}$ such that $f(A) \subset A, f(B) \subset B$.
\end{definition}

\noindent Hence, reducing the observed area
in order to  simplify its complexity, is impossible  if $\Gamma(f)$ is
strongly connected.  Moreover, under this  hypothesis the surveillance
scheme is strongly transitive:

\begin{definition} \label{def11}
A  dynamical system  $\left( \mathcal{X},  f\right)$ is  {\bf strongly
transitive} if $\forall x,y  \in \mathcal{X}$, $\forall r>0$, $\exists
z  \in  \mathcal{X}$,  $d(z,x)~\leq~r  \Rightarrow  \exists  n  \in
\mathds{N}^{\ast}$, $f^n(z)=y$.
\end{definition}
According to this  definition, for all pair of points  $x$, $y$ in the
phase space, a  point $z$ can be found in the  neighborhood of $x$ such
that one of its iterate $f^n(z)$  is $y$. Indeed, this result has been
stated   during   the  proof   of   the   transitivity  presented   in
\cite{guyeux09}. Among other things,  the strong transitivity leads to
the fact  that without the knowledge  of the initial  awaken nodes, all
scheduling are possible.  Additionally, no  nodes of the output space can
be  discarded  when  studying  the video-surveillance scheme: this 
 space  is  intrinsically
complicated and it cannot be decomposed or simplified.

Finally,  these WVSN possess  the  instability
property:
\begin{definition}
A dynamical  system $\left( \mathcal{X}, f\right)$ is  unstable if for
all  $x \in  \mathcal{X}$, the  orbit $\gamma_x:n  \in  \mathds{N} \longmapsto
f^n(x)$ is  unstable, that means: $\exists \varepsilon  > 0$, $\forall
\delta>0$, $\exists y \in \mathcal{X}$,  $\exists n \in \mathds{N}$, such that
$d(x,y)<\delta$    and   $d\left(\gamma_x(n),\gamma_y(n)\right)   \geq
\varepsilon$.
\end{definition}
This  property, which  is  implied by  sensitive  point dependence  on
initial conditions, leads to the fact that in all neighborhoods of any
point $x$ there  are points that can be apart  by $\varepsilon$ in the
future through iterations of the  WVSN. Thus, we can claim that
the behavior  of these  networks is unstable  when $\Gamma (f)$  is strongly
connected.

\subsubsection{Cryptanalysis in CKA Framework}

As stated in Section~\ref{subsec:Scheduling as Chaotic Iterations}, the 
proposed videosurveillance scheme can be rewritten as:

\begin{equation}
\left\{
\begin{array}{l}
X^{0}\in \mathcal{X} \\
X^{k+1}=G_{f_0}(X^{k}),
\end{array}
\right.
\end{equation}
where the phase space is $\mathcal{X}=\llbracket1;\mathsf{N}\times 2^\mathsf{N}\rrbracket^{\mathds{N}}\times \mathds{B}^{\mathsf{N}\times 2^\mathsf{N}}$, $X^0$ 
depends on a secret parameter $p=(p_1,\hdots,p_\mathsf{N})\in \left(\mathds{B}^\mathsf{N}\right)^\mathsf{N}$ 
whose binary digits are uniformy distributed, 
and $f_0$ stands for the vectorial negation on $\mathds{B}^{\mathsf{N}\times 2^\mathsf{N}}$.

We will now show that,

\begin{proposition}
The videosurveillance scheme proposed in this document is secure when facing a chosen key attack.
\end{proposition}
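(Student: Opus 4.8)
The plan is to reduce the statement to the \emph{perfect secrecy} criterion of Section~\ref{section:Smart Threat}: it suffices to show that the law $p(Y_k)$ of the probabilistic model an adversary can build does not depend on the key, i.e. $\forall k \in \mathds{K}$, $p(Y_k) = p(Y_{k_0})$, since the equality of all these distributions forces the mutual information $\mathcal{I}(k_0, Y_{k_0})$ to vanish. First I would make precise what a CKA adversary actually observes. By the reformulation of Section~\ref{subsec:Scheduling as Chaotic Iterations}, a run of the whole network is the orbit $X^{k+1} = G_{f_0}(X^k)$ in $\mathcal{X} = \llbracket 1;\mathsf{N}\times 2^\mathsf{N}\rrbracket^{\mathds{N}} \times \mathds{B}^{\mathsf{N}\times 2^\mathsf{N}}$, and what is visible is the extracted subsequence $E^{m^t}$, namely the binary ids of the nodes that survey the area at each date. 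Thus $Y_k$ is the law of $(E^{m^t})_t$ induced jointly by the secret parameter $p$ that enters $X^0$ and by the exogenous sensed values $c_i(t)$ that periodically re-seed the per-node PRNGs.

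Next I would localise the key dependence. The only place the key enters the dynamics is $X^0$: the state $E_i^0$ of node $i$ is the binary decomposition of $g_i = p_i$, and the $p_i$ are, by construction, uniformly distributed over $\llbracket 0; 2^\mathsf{N}-1\rrbracket$. I would then invoke two ingredients already established in the paper. On the one hand, the scheduling coincides with the CI$(X,Y)$-generator in which $Y$ is CI(ISAAC,ISAAC), which is statistically perfect, so the emitted states are uniformly distributed; on the other hand, each re-seeding uses $h_i^{t_j}+t_j$ with $h_i^{t_j}$ the digest of a sensed value, and the hash function inherits the avalanche effect from the chaoticity of CIs. Combining the uniformity of $p$ with the uniformity of the generator's output, the marginal law of $(E^{m^t})_t$ comes out to be the \emph{same} law whatever fixed key $k$ is used; equivalently, for every pair of keys the induced observable processes are identically distributed. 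This gives $p(Y_k) = p(Y_{k_0})$ for all $k$, hence the claimed CKA security and perfect secrecy, and I would close by noting that the expansivity, topological mixing and strong transitivity of Section~\ref{section:theoretical study} reinforce this by preventing any partial reconstruction of $X^t$ from leading to prediction.

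The step I expect to be the main obstacle is justifying that the key dependence is genuinely \emph{washed out} rather than merely scrambled. Two delicate points arise. First, the transient before the first listening time of each node: during it the state is still the key-derived $E_i^0$, not yet overwritten by the id-based $E_i^{t_j}$ and not yet re-seeded, so one must argue either that the wake-up orders emitted in this phase carry no key information, or that averaging over the uniform $p$ already renders this phase's contribution key-independent. Second, the re-seeding values $h_i^{t_j}+t_j$ depend on the real-world data $c_i(t_j)$ and not on $k$, so conditionally on the exogenous inputs the post-transient behavior is key-free; one must check that this conditional independence survives marginalization over those inputs. A secondary, more bookkeeping difficulty is to equip $Y_k$ and the equality $p(Y_k)=p(Y_{k_0})$ with a precise probability space (which source of randomness, which $\sigma$-algebra) so that the argument is rigorous rather than heuristic; once that framework is fixed, the uniformity arguments above finish the proof.
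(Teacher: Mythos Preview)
Your plan reaches the right target statement ($p(Y_k)=p(Y_{k_0})$ for all $k$) and correctly localises the key dependence to $X^0$, but the route you sketch is considerably more elaborate than what the paper actually does, and the difficulties you anticipate are artifacts of that route rather than intrinsic to the problem.

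The paper's proof is a one-line induction on $n$ showing that the global state $X^n$ is uniformly distributed on $\mathds{B}^N$ (with $N=\mathsf{N}\times 2^\mathsf{N}$) for every $n$. The base case is your own observation: $X^0$ is uniform because the $p_i$ come from a cryptographically secure PRNG. The inductive step is the elementary algebraic fact that if $X^n$ is uniform on $\mathds{B}^N$ and $S^n$ is independent of $X^n$, then $X^{n+1}$, obtained by flipping coordinate $S^n$ of $X^n$, is again uniform: writing $\mathbf{B}_k$ for the $k$-th unit vector, one has $P(X^{n+1}=e)=\sum_k P(X^n=e+\mathbf{B}_k)P(S^n=k)=2^{-N}\sum_k P(S^n=k)=2^{-N}$. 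No assumption on the distribution of $S^n$ is needed beyond independence from $X^n$, so the re-seeding by sensed values, the avalanche property of the hash, and the statistical quality of CI(ISAAC,ISAAC) never enter. Since the law of $X^n$ is the same (uniform) whatever $p$ was, the observable $Y_p$ has a key-independent law and CKA security follows immediately.

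Compared with this, your plan leans on qualitative properties (statistical perfection of the inner generator, avalanche of the hash, mixing and expansivity) that are neither necessary nor sufficient for the uniformity statement actually needed. This is why you run into the transient-phase and conditioning-on-exogenous-inputs obstacles: those arise only because you are trying to argue key-independence via the re-seeding mechanism rather than via preservation of uniformity under bit-flips. The paper's argument dissolves both obstacles at once, since uniformity is propagated deterministically from step $0$ and does not care when or how the strategy $S^n$ is refreshed. Your closing remark about needing a precise probability space is well taken, and indeed the paper's proof tacitly assumes independence of $X^n$ and $S^n$; but once that is granted, the computation is entirely mechanical and none of the topological machinery is required.
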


\begin{proof}
Let $N=\mathsf{N}\times 2^\mathsf{N}$.
We will prove by a mathematical induction that $\forall n \in \mathds{N}, X^n \sim
\mathbf{U}\left(\mathbb{B}^N\right)$. 
The base case is immediate, as the initial state of the
WVSN is initialized by $(g_1, ..., g_\mathsf{N}$, which
are producted by a cryptographically secure PRNG, so $X^0 \sim
\mathbf{U}\left(\mathbb{B}^N\right)$. 
Let us now suppose that the statement
 $X^n \sim
\mathbf{U}\left(\mathbb{B}^N\right)$ holds for some $n$. 
Let $e \in \mathbb{B}^N$ and $\mathbf{B}_k=(0,\cdots,0,1,0,\cdots,0) \in \mathbb{B}^N$ (the digit $1$ is in position $k$).
So $P\left(X^{n+1}=e\right)=\sum_{k=1}^N
P\left(X^n=e+\mathbf{B}_k,S^n=k\right).$
These two events are independent, thus:
$P\left(X^{n+1}=e\right)=\sum_{k=1}^N
P\left(X^n=e+\mathbf{B}_k\right) \times P\left(S^n=k\right)$.
According to the inductive hypothesis:
$P\left(X^{n+1}=e\right)=\frac{1}{2^N} \sum_{k=1}^N
 P\left(S^n=k\right)$.
The set of events $\left \{ S^n=k \right \}$ for $k \in \llbracket 1;N
\rrbracket$ is a partition of the universe of possible, so
$\sum_{k=1}^N P\left(S^n=k\right)=1$.

Finally, 
$P\left(X^{n+1}=e\right)=\frac{1}{2^N}$, which leads to $X^{n+1} \sim
\mathbf{U}\left(\mathbb{B}^N\right)$.
This result is true $\forall n \in \mathds{N}$, we thus have proven that, 
$$\forall p, Y_p=X^{N_0} \sim
\mathbf{U}\left(\mathbb{B}^N\right) $$

So the videosurveillance defined in this paper is secure in CKA.
\end{proof}

\section{Simulation Results}
\label{section: simulation results}

This section presents simulation results on comparing our chaotic approach to the 
standard C++ {\tt rand()}-based approach with random intrusions. We use the OMNET++ simulation environment and the next node selection will either use chaotic iterations or the C++ {\tt rand()} function ({\tt rand() \% $2^n$}) to produce a random number between 0 and $2^n$. For these set of simulations, $128$ sensor nodes (therefore $n=7$) are randomly deployed in a $75m * 75m$ area. Unless specified, sensors have an $36^o$ AoV and sensor node captures at the rate of 0.2fps. Each node starts with a battery level of 100 units and taking 1 picture consummes 1 unit of battery. When a node $V_i$ is selected to wake up, it will be awake for $T_i$ seconds. We set all $T_i=T=20s$. According to the behavior defined in section \ref{chaos-sch}, before going to sleep after an activity period of $e_i T$, $V_i$ will determine the next node to be waked up. It can potentially elect itself in which case $V_i$ stays active for at least another $T$ period. The elected node can be already active, in which case it simply increases its $e_i$ counter. We set about 50\% of the sensor nodes to be active initially (each sensor draws a random value between 0 and 1 and if the value is greater than 0.5, it will be active). This initial threshold is tunable but we did not try to vary this parameter in this paper. The results presented here have been averaged over 10 simulation runs with different initial seeds. Figure \ref{graph-activenode} shows the percentage of active nodes. Both the chaotic and the standard {\tt rand()} function have similar behavior: the percentage of active nodes progressively decreases due to battery shortage.

\begin{figure}  
\centering  
\includegraphics[width=\linewidth]{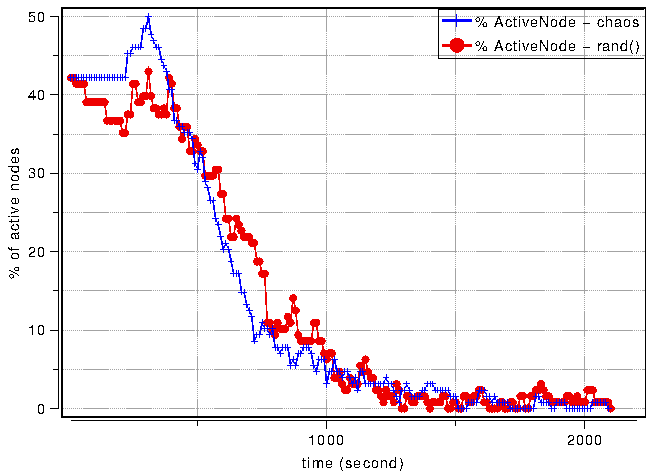}  
\caption{Percentage of active nodes.}   
\label{graph-activenode}  
\end{figure}

To compare both approaches in term of surveillance quality, we record to stealth time when intrusions are introduced in the area of interest. The stealth time is the time during which an intruder can travel in the field without being seen. The first intrusion starts at time 10s at a random position in the field. The scan line mobility model is then used with a constant velocity of 5m/s to make the intruder moving to the right part of the field. When the intruder is seen for the first time by a sensor, the stealth time is recorded and the mean stealth time computed. Then a new intrusion appears at another random position. This process is repeated until the simulation ends (i.e. no more sensor nodes with energy). Figure \ref{graph-stealthtime} shows the mean stealth time over the whole simulation duration. Figure \ref{graph-stealthtime-slidingwinavg} shows the same data but with a sliding window averaging filter of 20 values. As the nodes are uniformly distributed in the area of interest, there is a strong correlation between the percentage of active nodes and the stealth time as it can be expected. The result we want to highlight here is that our chaotic node selection approach has a similar level of performance in presence of random intrusions than standard {\tt rand()} function while providing a formal proof of non-prediction by malicious intruders.

\begin{figure}  
\centering  
\includegraphics[width=\linewidth]{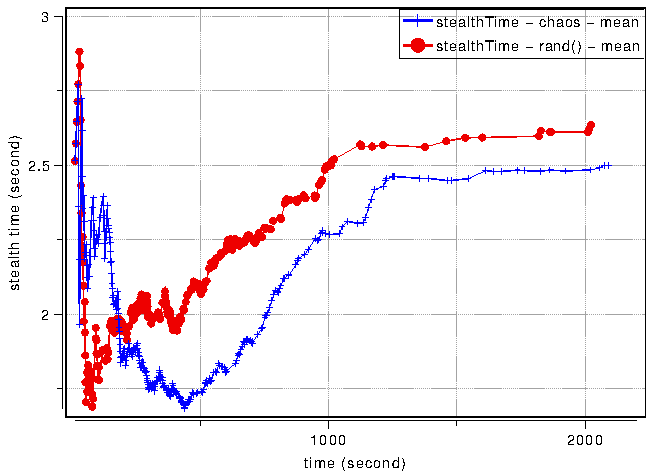}  
\caption{Stealth time.}   
\label{graph-stealthtime}  
\end{figure}

\begin{figure}
\centering  
\includegraphics[width=\linewidth]{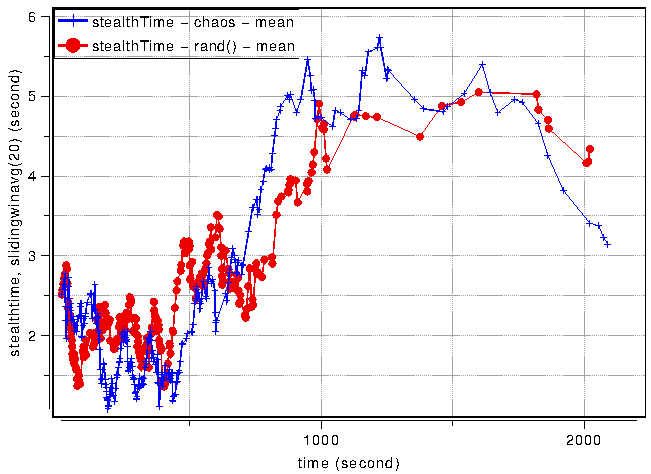}  
\caption{Stealth time.}   
\label{graph-stealthtime-slidingwinavg}  
\end{figure}

The last result we want to show is the energy consumption distribution. We recorded every 10s the energy level of each sensor node in the field and computed the mean and the standard deviation. Figure \ref{graph-stddev} shows the evolution of the standard deviation during the network lifetime. We can see that the chaotic node selection provides a slightly better distribution of activity than the 
standard {\tt rand()} function.

\begin{figure}
\centering  
\includegraphics[width=\linewidth]{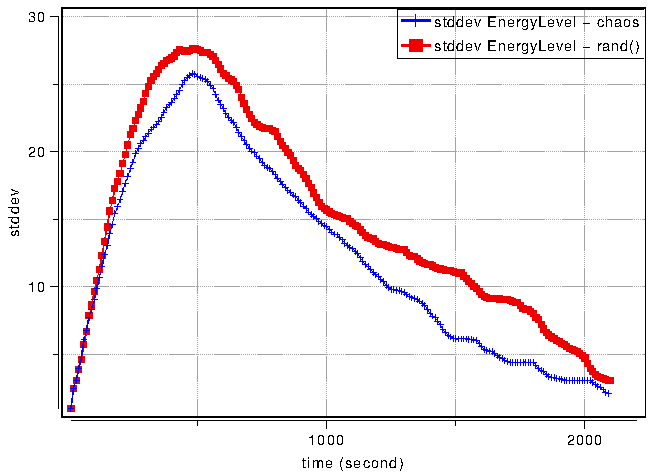}  
\caption{Evolution of the energy consumption's standard deviation.}   
\label{graph-stddev}  
\end{figure}

\section{Conclusions and Perspectives}

In this paper, a sleeping scheme for nodes has been proposed as an effective and secure solution to the joint scheduling problem in surveillance applications using WVSNs.
It has been evaluated through theoretical and practical aspects of performance and security. 
As opposed to existing works, this scheduling scheme is not based only on randomness, but on the mathematical theory of chaos too.
By doing so, we reinforce coverage and lifetime of the network, while obtaining a more secure scheme.
We have considered in this paper the case where the intruder is smart and active.
Furthermore, we have supposed that he can know the scheme and observe the behavior of the network. 
We have shown that, in addition of being able to preserve WVSN lifetime and to present comparable results against random attacks, our scheme is also able to withstand such malicious attacks due to its unpredictable behavior.

In future work, we intend to enlarge the security field in WVSN-based video surveillance, by making a classification of attacks that Oscar can achieve depending on the data he has access to.
Our desire is to distinguish between several levels of security into each category of malicious attacks, from the weakest one to the strongest one.
Additionally, we will study more precisely the topological properties of the scheduling scheme presented in this paper.

\bibliographystyle{plain}
\bibliography{mabase}%
\end{document}